\newtheorem{theorem}{Theorem}
\newtheorem{proposition}{Proposition}
\mathchardef\mhyphen="2D
\numberwithin{equation}{section}
\theoremstyle{plain}
\begin{document}

\begin{frontmatter}
\title{Variational Hamiltonian Monte Carlo via Score Matching}
\runtitle{Variational Hamiltonian Monte Carlo}

\begin{aug}
\author{\fnms{Cheng} \snm{Zhang}\thanksref{addr1}\ead[label=e1]{chengz4@uci.edu}},
\author{\fnms{Babak} \snm{Shahbaba}\thanksref{addr2}\ead[label=e2]{babaks@uci.edu}}
\and
\author{\fnms{Hongkai} \snm{Zhao}\thanksref{addr3}
\ead[label=e3]{zhao@math.uci.edu}}

\runauthor{Zhang et al.}

\address[addr1]{UC Irvine, 
    \printead{e1}}

\address[addr2]{UC Irvine, 
    \printead{e2}}

\address[addr3]{UC Irvine, 
    \printead{e3}}

\end{aug}

\begin{abstract}
Traditionally, the field of computational Bayesian statistics has been divided into two main subfields: variational methods and Markov chain Monte Carlo (MCMC). In recent years, however, several methods have been proposed based on combining variational Bayesian inference and MCMC simulation in order to improve their overall accuracy and computational efficiency. This marriage of fast evaluation and flexible approximation provides a promising means of designing scalable Bayesian inference methods. In this paper, we explore the possibility of incorporating variational approximation into a state-of-the-art MCMC method, Hamiltonian Monte Carlo (HMC), to reduce the required expensive computation involved in the sampling procedure, which is the bottleneck for many applications of HMC in big data problems. To this end, we exploit the regularity in parameter space to construct a free-form approximation of the target distribution by a fast and flexible surrogate function using an optimized additive model of proper random basis, which can be viewed as a single-hidden layer feedforward neural networks as well. The surrogate provides sufficiently accurate approximation while allowing for fast computation in the sampling procedure, resulting in an efficient approximate inference algorithm. We demonstrate the advantages of our method on both synthetic and real data problems.
\end{abstract}

\begin{keyword}[class=MSC]
\kwd[Primary ]{65C60}
\kwd[; secondary ]{65C05}
\end{keyword}

\begin{keyword}
\kwd{Markov Chain Monte Carlo}
\kwd{Variational inference}
\kwd{Free-form approximation}
\end{keyword}

\end{frontmatter}

\section{Introduction}

Bayesian inference has been successful in modern data analysis. Given a probabilistic model for the underlying mechanism of the observed data, Bayesian methods properly quantify uncertainty and reveal the landscape or global structure of parameter space. While conceptually simple, exact posterior inference in many Bayesian models is often intractable. Therefore, in practice, people often resort to approximation methods among which Markov chain Monte Carlo (MCMC) and variational Bayes (VB) are the two most popular choices. 

The MCMC approach is based on drawing a series of correlated samples by constructing a Markov chain with guaranteed convergence to the target distribution. Therefore, MCMC methods are asymptotically unbiased. Simple methods such as random-walk Metropolis \citep{metropolis53}, however, often suffer from slow mixing (due to their random walk nature) when encountering complicated models with strong dependencies among parameters. Introducing an auxiliary momentum variable, Hamiltonian Monte Carlo (HMC) \citep{duane87,neal11} reduces the random walk behavior by proposing states following a Hamiltonian flow which preserves the target distribution. By incorporating the geometric information of the target distribution, e.g., the gradient, HMC is able to generate distant proposals with high acceptance probabilities, enabling more efficient exploration of the parameter space than standard random-walk proposals. 

A major bottleneck of HMC, however, is the computation of the gradient of the potential energy function in order to simulate the Hamiltonian flow. As the datasets involved in many practical tasks, such as ``big data'' problems, usually have millions to billions of observations, such gradient computations are infeasible since they need full scans of the entire dataset. In recent years, many attempts have been made to develop scalable MCMC algorithms that can cope with very large data sets \citep{Welling11,Ahn12,chen14,Ding14}. The key idea of these methods stems from stochastic optimization where noisy estimates of the gradient based on small subsets of the data are utilized to scale up the algorithms.  The noise introduced by subsampling, however, could lead to non-ignorable loss of accuracy, which in turn hinders the exploration efficiency of standard MCMC approaches \citep{betancourt15}.

The main alternative to MCMC is variational Bayes inference \citep{jordan99,wainwright08}. As a deterministic approach, VB transforms Bayesian inference into an optimization problem where a parametrized distribution is introduced to fit the target posterior distribution by minimizing the Kullback-Leibler (KL) divergence with respect to the variational parameters. Compared to MCMC methods, VB introduces bias but is usually faster. 

A natural question would be: can we combine both methods to mitigate the drawbacks and get the best of both worlds? The first attempt in this direction was proposed by \cite{Freitas01} where a variational approximation was used as proposal distribution in a block Metropolis-Hasting (MH) MCMC kernel to locate the high probability regions quickly, thus facilitating convergence. Recently, a new synthesis of variational inference and Markov chain Monte Carlo methods has been explored in \cite{Salimans15} where one or more steps of MCMC are integrated into variational approximation. The extra flexibility from MCMC steps provides a rich class of distributions to find a closer fit to the exact posterior, which allows for further improvement on the approximation quality. 


In this work, we explore the possibility of utilizing variational approximation to speed up HMC for problems with large scale datasets. The key idea is to integrate fast variational approximations into the sampling procedure so that the overall computational complexity can be reduced. The main contributions are summarized as follows:
\begin{enumerate}
\item We train a fast neural network surrogate using extreme learning machine (ELM) to approximate the log-posterior (potential energy function). Neural networks with randomly-mapped features (with almost any nonlinear activation functions) has been proved to approximate a rich class of functions arbitrarily well in \cite{huang06chen}. After the fast training of a reasonably accurate neural network surrogate, we use its gradient to simulate the surrogate induced Hamiltonian flow in order to generate proposals. The learning and training of the surrogate function provides an effective and flexible way to explore both regularity of the parameter space and redundancy in the data collectively, which can be viewed as an implicit subsampling. However, unlike the popular subsampling-based methods where the \emph{leapforg} stepsize needs to be annealed or remains small enough to compensate for the resulting bias, the \emph{leapfrog} stepsize in our approach can remain close to that of standard HMC while keeping a comparable acceptance probability. As such, the exploration efficiency of HMC can be maintained while reducing the computational cost.

\item A new training procedure is proposed for an efficient and general surrogate method. The neural network surrogate is trained by minimizing the squared distance between the gradient of the surrogate and the gradient of the target (log-posterior), a procedure that resembles \emph{score matching} \citep{hyvarinen05}. The training data are collected while the ``modified'' HMC sampler (based on the surrogate induced Hamiltonian flow) explores the parameter space. In \cite{hyvarinen05}, the training data are assumed to be sampled from the target distribution. In our method, by including the gradient of the target distribution in the learning process, the information from the target distribution is explicitly included in the surrogate in our method. Therefore, the restriction on the training data is relaxed.

\item To further reduce the computation cost, we also use the computationally fast surrogate in the Metropolis-Hastings correction step. As a result, the modified HMC sampler will no longer converge to the exact target distribution. Instead, it will converge to the best approximation from an exponential family with pre-specified random sufficient statistics (which we call \emph{free-form} variational approximation in the sequel). This variational perspective distinguishes our approach from the existing surrogate methods on accelerating HMC. Moreover, compared to traditional {\it fixed-form} variational approximations, the \emph{free-form} variational approximation is usually more flexible and thus can provide a better fit to more general target distributions.
\end{enumerate}

Our paper is organized as follows. In section \ref{sec:background}, we introduce the two ingredients related to our method: Hamiltonian Monte Carlo and {\it fixed-form} variational Bayes. Section \ref{sec:vhmc} presents our method, termed Variational Hamiltonian Monte Carlo (VHMC). We demonstrate the efficiency of VHMC in a number of experiments in section \ref{sec:examples} and conclude in section \ref{sec:conc}.

\section{Background} \label{sec:background}
\subsection{Hamiltonian Monte Carlo}
In general formulation of Bayesian inference, a set of independent observations $Y = \{y_1,\ldots,y_N\}$ are modeled by an underlying distribution $p(y|\theta)$ with unknown parameter $\theta$. Given a prior distribution of $\theta \sim p(\theta)$, the posterior distribution is given by Bayesian formula
\begin{equation*}
p(\theta|Y) = \frac{p(Y|\theta)p(\theta)}{p(Y)} \propto \prod_{n=1}^N\;p(y_n|\theta)\cdot p(\theta)
\end{equation*}
To construct the Hamiltonian dynamical system \citep{duane87,neal11}, the position-dependent potential energy function is defined as the negative log unnormalized posterior density
\begin{equation}\label{eq:potential}
U(\theta) = -\sum_{n=1}^N\;\log p(y_n|\theta) - \log p(\theta)
\end{equation}
and the kinetic energy function is defined as a quadratic function of an auxiliary momentum variable $r$
\begin{equation*}
K(r) = \frac12r^TM^{-1}r
\end{equation*}
where $M$ is a mass matrix and is often set to identity, $I$. The fictitious Hamiltonian, therefore, is defined as the total energy function of the system
\begin{equation*}
H(\theta,r) = U(\theta) + K(r)
\end{equation*}
As one of the state-of-the-art MCMC methods, Hamiltonian Monte Carlo suppresses random walk behavior by simulating the Hamiltonian dynamical system to propose distant states with high acceptance probabilities. That is, in order to sample from the posterior distribution $p(\theta|Y)$, HMC augments the parameter space and generates samples from the joint distribution of $(\theta,r)$
\begin{equation}\label{eq:joint}
\pi(\theta,r) \propto \exp(-U(\theta)-K(r))
\end{equation}
Notice that $\theta$ and $r$ are separated in \eqref{eq:joint}, we can simply drop the momentum samples $r$ and the $\theta$ samples follow the marginal distribution which is exactly the target posterior.

To generate proposals, HMC simulates the Hamiltonian flow governed by the following differential equations
\begin{align}
\frac{d\theta}{dt} &= \frac{\partial H}{\partial r} = M^{-1}r \label{eq:velocity}\\
\frac{dr}{dt} &= - \frac{\partial H}{\partial \theta} = -\nabla_{\theta} U(\theta) \label{eq:force}
\end{align}
Over a period $t$, also called trajectory length, \eqref{eq:velocity} and \eqref{eq:force} together define a map $\phi_t: (\theta_0,r_0) \mapsto (\theta^\ast,r^\ast)$ in the extended parameter space, from the starting state to the end state. As implied by a Hamiltonian flow, $\phi_t$ is reversible, volume-preserving and also preserves the Hamiltonian $H(\theta_0,r_0) = H(\theta^\ast,r^\ast)$. These allow us to construct $\pi$-invariant Markov chains whose proposals will always be accepted. In practice, however, \eqref{eq:velocity} and \eqref{eq:force} are not analytically solvable and we need to resort to numerical integrators. As a symplectic integrator, the {\it leapfrog} scheme (see Algorithm \ref{alg:hmc}) maintains reversibility and volume preservation and hence is a common practice in HMC literatures. Although discretization introduces bias which needs to be corrected in an Metroplis-Hasting (MH) step, we can control the stepsizes to maintain high acceptance probabilities even for distant proposals.

\begin{algorithm}[!t]
\begin{algorithmic}
\STATE{\bfseries Input:} Starting position $\theta^{(1)}$ and step size $\epsilon$
\FOR{$t =1,2,\ldots,T$}
  \STATE \textit{Resample momentum $r$}
  \STATE $r^{(t)} \sim \mathcal{N}(0,M)$
  \STATE $(\theta_0,r_0) = (\theta^{(t)},r^{(t)})$
  \STATE \textit{Simulate discretization of Hamiltonian dynamics:}
  \FOR{$l = 1$ {\bfseries to} $L$} 
  \STATE $r_{l-1} \leftarrow r_{l-1} - \frac{\epsilon}{2} \nabla_\theta U(\theta_{l-1})$
  \STATE $\theta_l \leftarrow \theta_{l-1} + \epsilon M^{-1}r_{l-1}$
  \STATE $r_l \leftarrow r_l - \frac{\epsilon}{2} \nabla_\theta U(\theta_{l})$
  \ENDFOR
  \STATE $(\theta^{\ast},r^{\ast}) = (\theta_L,r_L)$
  \STATE \textit{Metropolis-Hasting correction:} 
  \STATE $u \sim \text{Uniform}[0,1]$
  \STATE $\rho = \exp[{ H(\theta^{(t)},r^{(t)})-H(\theta^{\ast},r^{\ast}) }]$
  \IF{$u < \min(1,\rho)$} 
  \STATE $\theta^{(t+1)} = \theta^{\ast}$ 
  \ELSE
  \STATE $\theta^{(t+1)} = \theta^{(t)}$
  \ENDIF
  \ENDFOR
 \end{algorithmic}
\caption{Hamiltonian Monte Carlo}
\label{alg:hmc}
\end{algorithm}

In recent years, many variants of HMC have been developed to make the algorithm more flexible and generally applicable in a variety of settings. For example, methods proposed in \cite{hoffman11} and \cite{wang13} enable automatically tuning of hyper-paramters such as the stepsize $\epsilon$ and the number of {\it leapfrog} steps $L$, saving the amount of tuning-related headaches. Riemannian Manifold HMC \citep{RMHMC} further improves standard HMC's efficiency by automatically adapting to local structures using Riemanian geometry of parameter space. These adaptive techniques could be potentially combined with our proposed method which focuses on reducing the computational complexity.

\subsection{Fixed-form Variational Bayes}
Instead of running a Markov chain, people can also approximate the intractable posterior distribution with a more convenient and tractable distribution. A popular approach of obtaining such an approximation is {\it fixed-form variational Bayes} \citep{Honkela10,saul96,salimans13} where a parametrized distribution $q_\eta(\theta)$ is proposed to approximate the target posterior $p(\theta|Y)$ by minimizing the KL divergence 
\begin{align}\label{eq:kl}
D_{KL}&(q_\eta(\theta)||p(\theta|Y)) = \int q_\eta(\theta)\log\left(\frac{q_\eta(\theta)}{p(\theta|Y)}\right) d\theta\nonumber\\
&= \log(p(Y)) + \int q_\eta(\theta)\log\left(\frac{q_\eta(\theta)}{p(\theta,Y)}\right) d\theta
\end{align}
since $\log(p(Y))$ is a constant (used extensively in model selection), it suffices to minimize the second term in \eqref{eq:kl}. Usually, $q_\eta(\theta)$ is chosen from the exponential family of distributions with the following canonical form:
\begin{equation}\label{eq:approx}
q_\eta(\theta) = \exp[T(\theta)\eta-A(\eta)]\nu(\theta)
\end{equation}
where $T(\theta)$ is a row vector of sufficient statistics, $A(\eta)$ is for normalization and $\nu(\theta)$ is a base measure. The column vector $\eta$ is often called the natural parameters of the exponential family distribution $q_\eta(\theta)$. Taking this approach and substituting into \eqref{eq:kl}, we now have a parametric optimization problem in $\eta$:
\begin{equation}\label{eq:op}
\hat{\eta} = \arg\min_{\eta} \mathbb{E}_{q_\eta(\theta)}[\log q_{\eta}(\theta) - \log p(\theta,Y)]
\end{equation}
The above optimization problem can be solved using gradient-based optimization or fix-point algorithms if $\mathbb{E}_{q_\eta(\theta)}[\log q_\eta(\theta)],\;\mathbb{E}_{q_\eta(\theta)}[\log p(\theta,Y)]$ and its derivatives with respect to $\eta$ can be evaluated analytically. Without assuming posterior independence and requiring conjugate exponential models, posterior approximations of this type are usually much more accurate than a factorized approximation following the mean-field assumptions. However, the requirement of being able to analytically evaluate those quantities mentioned above is also very restrictive. Many attempts, therefore, has been made to evaluate those quantities using stochastic approximations from Monte Carlo integration \citep{blei12,RGB13,Kingma13}. As an alternative, \cite{salimans13} proposed a new optimization algorithm which relates \eqref{eq:op} to stochastic linear regression. To reveal the connection, the posterior approximate \eqref{eq:approx} is relaxed and rewritten in the unnormalized form 
\begin{equation}\label{eq:adjust}
\tilde{q}_{\tilde{\eta}}(\theta) = \exp[\tilde{T}(\theta)\tilde{\eta}]\nu(\theta)
\end{equation}
where the nonlinear normalizer $A(\eta)$ is removed and the vectors of sufficient statistics and natural parameters are augmented, i.e. $\tilde{T}(\theta) = (1,T(\theta)),\; \tilde{\eta} = (\eta_0,\eta')'$. The unnormalized version of KL divergence is utilized to deal with $\tilde{q}_{\tilde{\eta}}(\theta)$ and achieves its minimum at  
\begin{equation}\label{eq:lr}
\tilde{\eta} = \mathbb{E}_{q}[\tilde{T}(\theta)'\tilde{T}(\theta)]^{-1}\mathbb{E}_{q}[\tilde{T}(\theta)'\log p(\theta,Y)]
\end{equation}
which resembles the maximum likelihood estimator for linear regression. Based on this observation, \cite{salimans13} derived a stochastic approximation algorithm using \eqref{eq:lr} as a fixed point update and approximating the involved expectations by weighted Monte Carlo. 

In the next section, we will discuss how the variational Bayes approach can be actually utilized to accelerate HMC. For this, we construct a fast and accurate approximation for the computationally expensive potential energy function. The approximation is provided by variational Bayes and is incorporated in the simulation of Hamiltonian flow.


\section{Variational Hamiltonian Monte Carlo}\label{sec:vhmc}
Besides subsampling, an alternative approach that can save computation cost is to construct fast and accurate surrogate functions for the expensive potential energy functions \citep{liu01,neal11}. As one of the commonly used models for emulating expensive-to-evaluate functions, Gaussian process (GP) is used in \cite{rasmussen03} to approximate the potential energy and its derivatives based on true values of these quantities (training set) collected during an initial exploratory phase. However, a major drawback of GP-based surrogate methods is that inference time typically grows cubically in the size of training set due to the necessity of inverting a dense covariance matrix. This is especially crucial in high dimensional spaces, where large training sets are often needed before a reasonable level of approximation accuracy is achieved. Our goal, therefore, is to develop a method that can scale to large training set while still maintaining a  desired level of flexibility. For this purpose, we propose to use neural networks along with efficient training algorithms to construct surrogate functions. A typical single-hidden layer feedforward neural network (SLFN) with scalar output is defined as
\begin{equation}\label{eq:slfn}
z(\theta) = \sum_{i=1}^sv_i\sigma(\theta;\gamma_i)
\end{equation}
where $\gamma_i$ and $v_i$ are the node parameter and output weight for the {\it i}th hidden neuron, $\sigma$ is a nonlinear activation function. Given a training dataset,
the estimates of those parameters can be obtained by minimizing the mean square error (MSE) cost function. 
To save training time, randomly assigned node parameters $\{\gamma_i\}_{i=1}^s$ are suggested in \cite{ferrari05} and \cite{huang06} where the optimization is reduced to a linear regression problem with randomly mapped features which can be solved efficiently using algebraic approaches. This can also be viewed as using an additive model based on random (adaptive) basis to approximate the target distribution. Unlike a standard Gaussian process, the above neural network based surrogate scales linearly in the size of training data, and cubically in the number of hidden neurons. This allows us to explicitly balance evaluation time and model capacity. Moreover, the random network in \eqref{eq:slfn} has been proven to approximate a rich class of functions arbitrarily well \citep{rahimi08}.

\subsection{Approximation with random networks}
As shown in \cite{rahimi08}, functions in \eqref{eq:slfn} with randomly assigned bases is flexible enough to provide accurate approximations for other well-studied classes of functions (e.g., Reproducing Kernel Hilbert Space). Let $\{\sigma(\cdot;\gamma): \gamma \in \Gamma\}$ be a family of functions on a compact set $\Theta\subset \mathbb{R}^d$ with parameter $\gamma$ specified over the set $\Gamma$. Let $p$ be a distribution on  $\Gamma$, consider a rich class of functions of the following form
\begin{equation}\label{eq:mixact}
 f(\theta) = \int_{\Gamma} \alpha(\gamma)\sigma(\theta;\gamma)\;d\gamma
\end{equation}
where $|\alpha(\gamma)| \leq C |p(\gamma)|,\;\forall \gamma \in \Gamma$ for some constant $C$. Define a norm $\|f\|_p = \sup_{\gamma} \left|\frac{\alpha(\gamma)}{p(\gamma)}\right|$ and the set
\begin{equation}
\mathcal{F}_p\equiv \left\{f(\theta)=\int_{\Gamma}\alpha(\gamma)\sigma(\theta;\gamma)\;d\gamma\;\big|\;\|f\|_p < \infty \right\}
\end{equation}
The following theorem shows that a given $f\in\mathcal{F}_p$ can be approximated within $\mathcal{O}(\|f\|_p/\sqrt{s})$ by a function of the form 
\begin{equation}
z(\theta) = \sum_{i=1}^sv_i\sigma(\theta;\gamma_i)
\end{equation}
where $\gamma_1,\ldots,\gamma_s$ are sampled iid from $p(\gamma)$. See \citet{rahimi08} for a detailed proof.

\begin{theorem}[Rahimi 2008]
Let $\mu$ be any probability measure on $\Theta$, and define the norm $\|f\|^2_\mu=\int_{\Theta}f^2(\theta)\mu(d\theta)$. Suppose $\sigma$ satisfies $\sup_{\theta,\gamma}|\sigma(\theta;\gamma)|\leq 1$. Fix $f\in\mathcal{F}_p$. Then, $\forall \delta >0$, with probability at least $1-\delta$ over $\gamma_1,\ldots,\gamma_s$ drawn iid from $p$, there exist $v_1,\ldots,v_s$ so that the function 
\[
z(\theta) = \sum_{i=1}^sv_i\sigma(\theta;\gamma_i)
\]
satisfies 
\begin{equation}
\|z-f\|_\mu < \frac{\|
f\|_p}{\sqrt{s}}\left(1+\sqrt{2\log\frac{1}{\delta}}\right)
\end{equation}
\end{theorem}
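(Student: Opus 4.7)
The plan is to view the integral representation of $f$ as an expectation under $p$ and approximate it by a sample mean. Concretely, I would set $\beta(\gamma) = \alpha(\gamma)/p(\gamma)$ (interpreted as zero wherever $p(\gamma)=0$), so that
\[
f(\theta) = \int_{\Gamma} \beta(\gamma)\sigma(\theta;\gamma)\, p(\gamma)\,d\gamma = \mathbb{E}_{\gamma\sim p}\bigl[\beta(\gamma)\sigma(\theta;\gamma)\bigr],
\]
and by the hypothesis on $\alpha$ the function $\beta$ is uniformly bounded: $|\beta(\gamma)| \leq \|f\|_p$. Given iid draws $\gamma_1,\ldots,\gamma_s\sim p$, the natural choice is $v_i = \beta(\gamma_i)/s$, which makes $z(\theta)=\frac{1}{s}\sum_{i=1}^{s}\beta(\gamma_i)\sigma(\theta;\gamma_i)$ an unbiased Monte Carlo estimator of $f(\theta)$ for every $\theta\in\Theta$.

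Next, I would control the expected $L^2(\mu)$ error. By Fubini's theorem, $\mathbb{E}\|z-f\|_\mu^2 = \int_{\Theta}\mathrm{Var}(z(\theta))\,\mu(d\theta)$, and for each fixed $\theta$,
\[
\mathrm{Var}(z(\theta)) \leq \frac{1}{s}\,\mathbb{E}\bigl[\beta(\gamma)^2\sigma(\theta;\gamma)^2\bigr] \leq \frac{\|f\|_p^{\,2}}{s},
\]
using $|\beta|\leq\|f\|_p$ and $|\sigma|\leq 1$. Since $\mu$ is a probability measure this yields $\mathbb{E}\|z-f\|_\mu^2 \leq \|f\|_p^{\,2}/s$, and by Jensen's inequality $\mathbb{E}\|z-f\|_\mu \leq \|f\|_p/\sqrt{s}$. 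This is already the leading term of the target bound.

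To upgrade the in-expectation estimate to a high-probability statement, I would view $F(\gamma_1,\ldots,\gamma_s) := \|z-f\|_\mu$ as a function of independent random variables and apply McDiarmid's bounded-differences inequality. Replacing a single $\gamma_i$ by $\gamma_i'$ changes $z(\theta)$ by $\frac{1}{s}\bigl(\beta(\gamma_i')\sigma(\theta;\gamma_i')-\beta(\gamma_i)\sigma(\theta;\gamma_i)\bigr)$, whose magnitude is at most $2\|f\|_p/s$ uniformly in $\theta$; the triangle inequality in $L^2(\mu)$ then gives the bounded-differences constant $c_i=2\|f\|_p/s$. McDiarmid yields
\[
\Pr\bigl(F \geq \mathbb{E} F + t\bigr) \leq \exp\!\left(-\frac{2t^2}{\sum_i c_i^2}\right) = \exp\!\left(-\frac{t^2 s}{2\|f\|_p^{\,2}}\right),
\]
and setting the right-hand side equal to $\delta$ gives $t = \|f\|_p\sqrt{2\log(1/\delta)/s}$. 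Combining with the expectation bound delivers $\|z-f\|_\mu \leq \frac{\|f\|_p}{\sqrt{s}}\bigl(1+\sqrt{2\log(1/\delta)}\bigr)$ with probability at least $1-\delta$, as claimed.

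The main obstacle I anticipate is not any single inequality but consistently matching up the uniform bound $|\beta|\leq\|f\|_p$ in both the variance estimate and the Lipschitz estimate, and verifying that the $L^2(\mu)$ norm inherits the $2\|f\|_p/s$ bounded-differences constant from the pointwise bound on $z$; once these bookkeeping points are settled the argument is essentially a textbook concentration proof.
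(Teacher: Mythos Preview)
Your proposal is correct and is precisely the argument the paper defers to: the paper does not prove this theorem itself but cites \citet{rahimi08}, whose proof proceeds exactly by writing $f$ as an expectation, taking the empirical average with $v_i=\beta(\gamma_i)/s$, bounding $\mathbb{E}\|z-f\|_\mu$ via the pointwise variance, and then applying McDiarmid's inequality with bounded-differences constant $2\|f\|_p/s$. Your bookkeeping on the variance bound, the Jensen step, and the McDiarmid constants is accurate and yields the stated constant $1+\sqrt{2\log(1/\delta)}$.
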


The activation function $\sigma$ and probability distribution $p(\gamma)$ on $\Gamma$ together defines a Reproducing Kernel Hilbert Space (RKHS) $\mathcal{H}$ with the following kernel $k$ on $\Theta\times\Theta$
\begin{equation}\label{eq:kernelRKHS}
k(\theta,\theta') = \int_{\Gamma} p(\gamma)\sigma(\theta;\gamma)\sigma(\theta';\gamma)d\gamma
\end{equation}
which is clearly positive definite. Alternatively, the following proposition shows that $\mathcal{H}$ can be constructed based on functions of the form \eqref{eq:mixact}. 

\begin{proposition}
Let the space $\hat{\mathcal{H}}$ be the completion of the set of functions of the form \eqref{eq:mixact} such that
\[
\int_{\Gamma} \frac{\alpha(\gamma)^2}{p(\gamma)}\;d\gamma < \infty
\]
with the inner product 
\[
\langle f, g\rangle = \int_{\Gamma} \frac{\alpha(\gamma)\beta(\gamma)}{p(\gamma)}\;d\gamma
\]
where $g(\theta) = \int_{\Gamma} \beta(\gamma)\sigma(\theta;\gamma)\;d\gamma$. Then $\hat{\mathcal{H}} = \mathcal{H}$.
\end{proposition}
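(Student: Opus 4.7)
The plan is to invoke the Moore--Aronszajn theorem, which states that a symmetric positive-definite kernel uniquely determines an RKHS. Thus it suffices to verify that $\hat{\mathcal{H}}$ is a Hilbert space of functions on $\Theta$ admitting the kernel $k$ from \eqref{eq:kernelRKHS} as a reproducing kernel; uniqueness will then force $\hat{\mathcal{H}}=\mathcal{H}$. First I would exhibit the kernel sections inside $\hat{\mathcal{H}}$ by making the natural choice $\alpha_\theta(\gamma):=p(\gamma)\sigma(\theta;\gamma)$. A direct substitution in \eqref{eq:mixact} gives $\int_{\Gamma}\alpha_\theta(\gamma)\sigma(\cdot;\gamma)\,d\gamma = k(\cdot,\theta)$, while $\int_\Gamma \alpha_\theta^2/p\,d\gamma = \int_\Gamma p(\gamma)\sigma(\theta;\gamma)^2\,d\gamma$ is finite under the boundedness assumption on $\sigma$. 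The reproducing identity then drops out of the prescribed inner product: for any $f\in\hat{\mathcal{H}}$ represented by $\alpha$,
\[
\langle f,k(\cdot,\theta)\rangle_{\hat{\mathcal{H}}}
= \int_{\Gamma}\frac{\alpha(\gamma)\alpha_\theta(\gamma)}{p(\gamma)}\,d\gamma
= \int_{\Gamma}\alpha(\gamma)\sigma(\theta;\gamma)\,d\gamma
= f(\theta).
\]

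The step that I expect to be the main obstacle is the well-definedness of $\langle\cdot,\cdot\rangle_{\hat{\mathcal{H}}}$, since a function $f$ of the form \eqref{eq:mixact} need not have a unique representative $\alpha$. I would deal with this by recasting the construction through the linear map $L:L^2(\Gamma,p^{-1}d\gamma)\to\{\text{functions on }\Theta\}$ defined by $(L\alpha)(\theta)=\int_\Gamma \alpha(\gamma)\sigma(\theta;\gamma)\,d\gamma$, noticing that $(L\alpha)(\theta)=\langle \alpha,\alpha_\theta\rangle_{L^2(p^{-1}d\gamma)}$ is continuous in $\alpha$ for each $\theta$ so that $\ker L$ is closed. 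Identifying $\hat{\mathcal{H}}$ with the quotient $L^2(\Gamma,p^{-1}d\gamma)/\ker L$, equivalently with $(\ker L)^{\perp}$ equipped with the induced inner product, makes $\langle\cdot,\cdot\rangle_{\hat{\mathcal{H}}}$ unambiguous: two representatives of the same $f$ differ by an element of $\ker L$, which is annihilated under the orthogonal projection. This quotient is already complete, so no additional ``completion'' step is required, and the reproducing identity above is now well posed. One should also check that $\alpha_\theta$ itself lies in $(\ker L)^{\perp}$, which holds because $\langle\alpha_0,\alpha_\theta\rangle=(L\alpha_0)(\theta)=0$ for every $\alpha_0\in\ker L$.

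Finally, to confirm that $\hat{\mathcal{H}}$ is genuinely the RKHS with kernel $k$ rather than a strictly larger Hilbert space merely containing the kernel sections, I would verify that $\{k(\cdot,\theta):\theta\in\Theta\}$ has dense span in $\hat{\mathcal{H}}$. Any $\alpha\in(\ker L)^\perp$ orthogonal to every $\alpha_\theta$ satisfies $(L\alpha)(\theta)=\langle\alpha,\alpha_\theta\rangle=0$ for all $\theta$, hence $L\alpha\equiv 0$ and therefore $\alpha=0$ in $(\ker L)^{\perp}$. Combined with the reproducing property established above, this shows that $\hat{\mathcal{H}}$ coincides with the closed linear span of $\{k(\cdot,\theta):\theta\in\Theta\}$ under $\langle\cdot,\cdot\rangle_{\hat{\mathcal{H}}}$, which is precisely the Moore--Aronszajn construction of $\mathcal{H}$. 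Invoking uniqueness of the RKHS then yields $\hat{\mathcal{H}}=\mathcal{H}$.
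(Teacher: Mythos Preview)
Your argument is correct. The paper itself does not supply a proof but simply refers the reader to \cite{rahimi08}, so there is no in-paper proof to compare against line by line. Your route via the Moore--Aronszajn uniqueness theorem is the standard one for results of this type: you correctly identify the kernel sections through $\alpha_\theta(\gamma)=p(\gamma)\sigma(\theta;\gamma)$, verify the reproducing identity, and handle the density of $\{k(\cdot,\theta)\}$ by the duality argument in $(\ker L)^\perp$. The one point worth highlighting is your treatment of well-definedness: the paper's statement glosses over the non-uniqueness of the representative $\alpha$ for a given $f$, and your quotient construction $L^2(\Gamma,p^{-1}d\gamma)/\ker L\cong(\ker L)^\perp$ is exactly what is needed to make the inner product unambiguous and to see that the space is already complete without a further completion step. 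This is a genuine clarification relative to the statement as written.
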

\begin{proof}
See a proof in \citet{rahimi08}.
\end{proof}
Note that $\forall f\in \mathcal{F}_p$, 
\[
\int_{\Gamma} \frac{\alpha(\gamma)^2}{p(\gamma)} = \int_{\Gamma}\frac{\alpha(\gamma)^2}{p(\gamma)^2}p(\gamma)d\gamma \leq \|f\|^2_p <\infty
\]
Therefore, $\mathcal{F}_p$ is a subset of RKHS $\mathcal{H}$. In fact, \citet{rahimi08} shows that $\mathcal{F}_p$ is a dense subset of $\mathcal{H}$.
\begin{theorem}
Let $\mathcal{F}_p$ and $\mathcal{H}$ be defined as above for a given activation function $\sigma(\theta;\gamma)$ and probability density $p(\gamma)$. Then $\mathcal{F}_p$ is dense in $\mathcal{H}$.
\end{theorem}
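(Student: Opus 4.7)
The plan is to leverage the previous proposition, which identifies $\mathcal{H}$ with the completion $\hat{\mathcal{H}}$ of the space of functions $g(\theta)=\int_\Gamma \beta(\gamma)\sigma(\theta;\gamma)d\gamma$ with $\int_\Gamma \beta(\gamma)^2/p(\gamma)\, d\gamma<\infty$, equipped with the inner product $\langle f,g\rangle = \int_\Gamma \alpha(\gamma)\beta(\gamma)/p(\gamma)\, d\gamma$. This gives a concrete, computable formula for the $\mathcal{H}$-norm on the dense subset $\mathcal{G}\subset\mathcal{H}$ consisting of such integral representations. The strategy is then a truncation argument: approximate an arbitrary element of $\mathcal{G}$ by an element of $\mathcal{F}_p$ whose coefficient function is a bounded multiple of $p$, and conclude density in $\mathcal{H}$ by a standard two-step argument.

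First, since $\mathcal{G}$ is dense in $\mathcal{H}=\hat{\mathcal{H}}$ by definition of the completion, it suffices to show that every $g\in\mathcal{G}$ can be approximated in $\|\cdot\|_\mathcal{H}$ by elements of $\mathcal{F}_p$. Fix such a $g$ with coefficient $\beta(\gamma)$ satisfying $\int_\Gamma \beta^2/p\, d\gamma<\infty$, and for each $M>0$ define the truncated coefficient
\[
\beta_M(\gamma) = \beta(\gamma)\,\mathbf{1}\bigl\{|\beta(\gamma)|\leq M\,p(\gamma)\bigr\},
\]
together with the corresponding function $f_M(\theta) = \int_\Gamma \beta_M(\gamma)\sigma(\theta;\gamma)d\gamma$. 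By construction $\sup_\gamma|\beta_M(\gamma)/p(\gamma)|\leq M$, so $f_M\in\mathcal{F}_p$ and in particular $f_M\in\mathcal{G}$.

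Next I would compute the error in the $\mathcal{H}$-norm using the inner product formula from the proposition. Since $g-f_M$ has coefficient $\beta-\beta_M = \beta\cdot\mathbf{1}\{|\beta|>Mp\}$,
\[
\|g-f_M\|_\mathcal{H}^2 = \int_\Gamma \frac{(\beta(\gamma)-\beta_M(\gamma))^2}{p(\gamma)}\, d\gamma = \int_{\{|\beta|>Mp\}}\frac{\beta(\gamma)^2}{p(\gamma)}\, d\gamma.
\]
The integrand on the right is bounded above by the integrable function $\beta^2/p$, and the indicator $\mathbf{1}\{|\beta(\gamma)|>Mp(\gamma)\}\to 0$ pointwise as $M\to\infty$ wherever $p(\gamma)>0$; by dominated convergence the integral tends to $0$. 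Combined with a standard $\varepsilon/2$ argument (approximate any $h\in\mathcal{H}$ to within $\varepsilon/2$ by some $g\in\mathcal{G}$, then approximate $g$ to within $\varepsilon/2$ by $f_M\in\mathcal{F}_p$), this yields density of $\mathcal{F}_p$ in $\mathcal{H}$.

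The only place requiring care is the handling of the set $\{p=0\}$, but $\beta$ is only meaningfully defined where $p>0$ (since $\int\beta^2/p<\infty$ forces $\beta=0$ a.e. on $\{p=0\}$), and the same holds for $\beta_M$, so the dominated convergence step is clean. Otherwise the argument is routine once the explicit norm formula from the preceding proposition is in hand; the main conceptual step is simply recognizing that the boundedness condition $|\alpha|\leq Cp$ defining $\mathcal{F}_p$ is exactly what a truncation of $L^2(p^{-1}d\gamma)$ coefficients yields.
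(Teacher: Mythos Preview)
Your argument is correct, but it proceeds along a different line from the paper's. The paper does not use the $L^2(p^{-1}d\gamma)$ description of $\hat{\mathcal{H}}$ together with a truncation/dominated-convergence argument. Instead, it invokes the RKHS characterization directly: the finite linear combinations $f(\theta)=\sum_i v_i k(\theta,\theta_i)$ are dense in $\mathcal{H}$ by construction, and each such $f$ already belongs to $\mathcal{F}_p$, since the kernel formula $k(\theta,\theta')=\int_\Gamma p(\gamma)\sigma(\theta;\gamma)\sigma(\theta';\gamma)\,d\gamma$ gives the integral representation $f(\theta)=\int_\Gamma \alpha_f(\gamma)\sigma(\theta;\gamma)\,d\gamma$ with $\alpha_f(\gamma)=p(\gamma)\sum_i v_i\sigma(\theta_i;\gamma)$, whence $|\alpha_f(\gamma)/p(\gamma)|\le \sum_i|v_i|$ under the standing assumption $\sup_{\theta,\gamma}|\sigma(\theta;\gamma)|\le 1$.

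In terms of trade-offs: the paper's route is shorter and exhibits a concrete dense subset (the kernel span) that sits inside $\mathcal{F}_p$, but it tacitly relies on the boundedness of $\sigma$. Your truncation argument is slightly longer yet more self-contained, using only the norm formula from the preceding proposition and dominated convergence; in particular it does not appeal to $|\sigma|\le 1$, so it would go through under weaker hypotheses on the activation.
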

\begin{proof}
By the property of RKHS, $\mathcal{H}$ is the completion of the set of all finite linear combinations of the form
\begin{equation}\label{eq:spankernel}
f(\theta) = \sum_i v_ik(\theta;\theta_i),\quad \theta_i \in \Theta
\end{equation}
with the inner product satisfying $\langle k(\cdot,\theta_i),k(\cdot,\theta_j)\rangle = k(\theta_i,\theta_j)$. Using \eqref{eq:kernelRKHS}, we can rewrite \eqref{eq:spankernel} in the following form 
\[
f(\theta) = \sum_iv_i\int_{\Gamma} p(\gamma)\sigma(\theta;\gamma)\sigma(\theta_i;\gamma)\;d\gamma = \int_{\Gamma} p(\gamma)\sum_i v_i\sigma(\theta_i;\gamma) \sigma(\theta;\gamma) \; d\gamma = \int_{\Gamma} \alpha_f(\gamma)\sigma(\theta;\gamma) \; d\gamma
\]
where $\alpha_f(\gamma) = p(\gamma)\sum_i v_i\sigma(\theta_i;\gamma) \Rightarrow \left|\frac{\alpha_f(\gamma)}{p(\gamma)}\right| \leq \sum_i|v_i|, \forall \gamma\in {\Gamma}$, implying that $f\in\mathcal{F}_p$.
\end{proof}
Since certain RKHSs are known to fit a rich class of (density) functions arbitrarily well, approximating these RKHSs with a small number of random bases allows for efficient surrogate construction with comparable approximation accuracy.

\subsection{Free-form Variational Bayes}\label{sec:ffvb}
The correspondence between distributions and their potential energy functions builds a bridge between distribution approximation and function approximation. Viewing this way, each random neural network in \eqref{eq:slfn} corresponds to a distribution in the exponential family 
\begin{equation}\label{eq:free}
q_{v}(\theta) \propto  \exp(-z(\theta)) = \exp[-\sum_{i=1}^sv_i\sigma(\theta;\gamma_i)-\Phi(v)]
\end{equation}
where $v=(v_i,\;i=1,2,\ldots,s)$ is called the vector of canonical parameters, and the collection of randomly-mapped features $\Psi = (\Psi_i,\;i=1,2,\ldots,s),\;\Psi_i=-\sigma(\theta;\gamma_i),\;i=1,2,\ldots,s$ is known as sufficient statistics. The quantity $\Phi$, known as the log partition function, is defined by the following integral
\[
\Phi(v) = \log \int \exp(v\cdot \Psi(\theta))\;d\theta
\]
Note that $q_v(\theta)$ is properly normalized if and only if the integral is finite. Therefore, the canonical parameters $v$ of interest belong to the set 
\begin{equation}\label{eq:paradom}
\Omega := \{v\in\mathbb{R}^s|\Phi(v)<+\infty\}
\end{equation}
We call $q_v(\theta)$ the induced distribution of neural network surrogate $z(\theta)$ if $v\in\Omega$.

As our neural network surrogates approximate the true potential energy function $U(\theta)$, the underlying distribution $q_{v}(\theta)$ then approximates the target posterior distribution $p(\theta|Y)$. Because both the surrogate induced distribution $q_v(\theta)$ and the target posterior distribution $p(\theta|Y)$ are known up to a constant, we introduce the following expected squared distance between unnormalized densities 
\begin{equation}\label{eq:smdist}
\tilde{D}_{SM}(p(\theta|Y)||q_v(\theta)) = \frac12\int q_v(\theta)\| \nabla_\theta z(\theta)-\nabla_\theta U(\theta)\|^2\;d\theta
\end{equation}
which is similar to the well known score matching squared distance \citep{hyvarinen05} (see section \ref{sec:score} for a more detailed explanation). By minimizing the expected squared distance $\tilde{D}_{SM}$, we arrive at a variational inference algorithm
\begin{equation}\label{eq:freevar}
\hat{v} = \arg\min_{v\in\Omega}\tilde{D}_{SM}(p(\theta|Y)||q_v(\theta))
\end{equation}

The surrogate induced approximation \eqref{eq:free} enriches our choices for variational approximation from {\it fixed-form} tractable distributions (e.g., Gaussian or mixture of Gaussians) to fast and flexible intractable distributions. The integral in \eqref{eq:smdist} is usually hard to evaluate in practice but can be approximated using samples from the surrogate induced distribution $q_v(\theta)$ by the law of large numbers. Unlike the {\it fixed-form} approximation, the surrogate induced distribution generally does not allow for drawing samples directly. However, we can use MCMC methods to generate samples from it. 

Due to the random nature in approximation \eqref{eq:free}, we call variational algorithm \eqref{eq:freevar} {\it free-form} variational Bayes. By choosing a proper number of hidden neurons $s$, the {\it free-form} variational Bayes provides an implicit subsampling procedure that can effectively remove redundancy and noise in the data while striking a good balance between computation cost and approximation accuracy of the underlying distribution.

{\bf Remark.}
From an approximation point of view, each $\sigma(\theta;\gamma)$ is a basis with a random configuration of $\gamma$, which specifies the orientation and scaling properties, within a profile of the activation function $\sigma$.  In particular we choose the  \emph{softplus} function for additive node $\sigma(\theta;w,d) = \log\big(1+\exp(w\cdot\theta + d)\big)$ to approximate the potential function, e.g., the Hamiltonian corresponding to the posterior distribution, for our free-form variational Bayes. Of course, the choice of $\sigma$ is general. As shown in \cite{NNSHMC}, different basis can be used in our free-form variational Bayes formulation. In particular, if exponential square kernel is used as radial basis functions (RBF) centered at given data, GP method is recovered. However, using kernel functions centered at very data point, GP method does not effectively exploit redundancy in data or regularity in parameter space and hence may result in expensive computation cost as well as instability for large data set. If exponential square kernel is used at a smaller set of properly induced points, a more computationally tractable GP model that tries to exploit redundancy in data, sparse GP \citep{snelson06,joaquin05}, is recovered. Since kernel function is usually local, to approximate potential functions in HMC which go to infinity at far field, it is shown in  \cite{NNSHMC} that the use of random basis based on softplus function provides a more compact and better approximation that has a good balance between local features and global behaviors. Also the choice of $s$, the number of basis, can be used to balance flexibility/accuracy and overfitting.

\subsection{Surrogate Induced Hamiltonian Flow}
To sample from the surrogate induced distribution $q_v(\theta)$, we generate proposals by simulating the corresponding surrogate induced Hamiltonian flow governed by the following equations
\begin{align}
\frac{d\theta}{dt} &= \frac{\partial \tilde{H}}{\partial r} = M^{-1}r \label{eq:survel}\\
\frac{dr}{dt} &= - \frac{\partial \tilde{H}}{\partial \theta} = -\nabla_{\theta} z(\theta) \label{eq:surforce}
\end{align}
where the modified Hamiltonian is 
\[
\tilde{H}(\theta,r) = z(\theta) + K(r)
\]
Practitioners can use \emph{leapfrog} scheme to solve \eqref{eq:survel} and \eqref{eq:surforce} numerically. The end-point $(\theta^\ast,r^\ast)$ of the trajectory starting at $(\theta_0,r_0)$ is accepted with probability
\begin{equation}\label{eq:apsur}
\alpha_{\mathrm{vhmc}} = \min[1,\exp(\tilde{H}(\theta_0,r_0)-\tilde{H}(\theta^\ast,r^\ast))]
\end{equation}
Similar to the true Hamiltonian flow, we have the following theorem
\begin{theorem}\label{thm:surHL}
If we construct a Markov chain by simulating surrogate induced Hamiltonian flow \eqref{eq:survel} and \eqref{eq:surforce} using \emph{leapfrog} steps and Metropolis-Hastings correction with the acceptance probability according to \eqref{eq:apsur}, the equilibrium distribution of the chain is 
\[
\tilde{\pi}(\theta,r)\propto \exp(-z(\theta)-K(r))
\]
\end{theorem}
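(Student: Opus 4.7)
The plan is to reduce the theorem to the standard correctness argument for HMC, with the true potential $U$ replaced throughout by the surrogate $z$. First I would split one step of the chain into two sub-updates: (i) resample $r \sim \mathcal{N}(0,M)$, and (ii) run $L$ leapfrog steps of the surrogate dynamics \eqref{eq:survel}--\eqref{eq:surforce} to produce $(\theta^\ast,r^\ast)$ and accept/reject using $\alpha_{\mathrm{vhmc}}$ in \eqref{eq:apsur}. Since $\tilde{\pi}$ factorizes as $\tilde{\pi}(\theta,r) \propto \exp(-z(\theta)) \cdot \exp(-K(r))$, the conditional $\tilde{\pi}(r\mid\theta) = \mathcal{N}(r;0,M)$, so step (i) trivially preserves $\tilde{\pi}$. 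It remains to show that step (ii) preserves $\tilde{\pi}$.

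For step (ii), the strategy is to verify detailed balance with respect to $\tilde{\pi}$. I would invoke the two standard structural properties of the leapfrog integrator, which depend only on the Hamiltonian having the separable form $\tilde{H}(\theta,r) = z(\theta) + \tfrac12 r^T M^{-1} r$ and not on the particular choice of potential. Namely, letting $\phi$ denote the composition of $L$ leapfrog steps with step size $\epsilon$ applied to $\tilde H$, and letting $\mathcal{F}(\theta,r) = (\theta,-r)$ be momentum negation, one has (a) volume preservation $|\det D\phi| = 1$ (each leapfrog sub-step is a shear) and (b) reversibility $\mathcal{F}\circ\phi\circ\mathcal{F}\circ\phi = \mathrm{Id}$. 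Because $z$ is assumed smooth, these proofs carry over verbatim from the standard HMC setting with $U$ replaced by $z$ and $\nabla_\theta U$ by $\nabla_\theta z$.

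With these properties in hand, the detailed balance check is a short calculation. Conjugating by the momentum flip $\mathcal{F}$ (which leaves $\tilde\pi$ invariant because $K$ is even in $r$) makes the leapfrog proposal deterministically reversible in the sense that the proposal density for the forward move $(\theta,r) \to (\theta^\ast,-r^\ast)$ equals the proposal density for the reverse move $(\theta^\ast,-r^\ast) \to (\theta,r)$, up to the Jacobian of $\phi$ which is $1$ by (a). The standard Metropolis--Hastings identity then reduces to requiring
\begin{equation*}
\tilde\pi(\theta,r)\,\alpha_{\mathrm{vhmc}}(\theta,r;\theta^\ast,r^\ast) \;=\; \tilde\pi(\theta^\ast,-r^\ast)\,\alpha_{\mathrm{vhmc}}(\theta^\ast,-r^\ast;\theta,r),
\end{equation*}
which, using $K(-r^\ast) = K(r^\ast)$ and the definition \eqref{eq:apsur}, is the familiar identity $\min(1,e^{-\Delta\tilde H}) = e^{-\Delta\tilde H}\min(1,e^{\Delta\tilde H})$ with $\Delta\tilde H = \tilde H(\theta^\ast,r^\ast) - \tilde H(\theta,r)$. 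Integrating over the momentum flip (or noting it preserves $\tilde\pi$) yields invariance of $\tilde\pi$ under step (ii), and combining with step (i) proves the theorem.

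The only nontrivial obstacle is establishing the volume preservation and reversibility of the leapfrog scheme for the surrogate Hamiltonian; however, since those properties follow solely from the separable structure of $\tilde H$ and the specific form of the leapfrog update (and $\nabla_\theta z$ is well defined for any smooth surrogate $z$), no new analysis is required beyond citing the standard argument in \citet{neal11}.
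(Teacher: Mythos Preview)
Your proposal is correct and follows the standard HMC invariance argument, which is precisely the approach the paper defers to in its Supplementary Materials: the theorem is simply the usual detailed-balance result for HMC with $U$ replaced everywhere by $z$, so the proof reduces to verifying that leapfrog on the separable surrogate Hamiltonian $\tilde H(\theta,r)=z(\theta)+K(r)$ is volume-preserving and time-reversible, and that the Metropolis--Hastings correction \eqref{eq:apsur} then enforces detailed balance with respect to $\tilde\pi$. There is no alternative route here, and your decomposition into the momentum-refresh step and the leapfrog/accept step, together with the citation of \citet{neal11} for the integrator properties, matches what the paper relies on.
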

See Supplementary Materials for a proof.

Theorem \ref{thm:surHL} implies that the marginal distribution of $\theta$ is exactly the surrogate induced distribution $q_v(\theta)$. Therefore, we can run the Markov chain and collect the values of interest, such as the potential energy function and its derivatives, as additional training data to improve the surrogate approximation (solving \eqref{eq:freevar}) on the fly. This way, our algorithm can be viewed as a generalized version of the stochastic approximation algorithm proposed in \cite{salimans13} for \emph{free-form} variational Bayes.

\subsection{Score Matching}\label{sec:score}
A well-known strategy to estimate unnormalized models is score matching \citep{hyvarinen05}. Assuming that data $D$ come from a distribution with unknown density $p_{D}(.)$, we want to find an approximation with a parameterized unnormalized density model $q_v(.)$, where $v$ is an $s$-dimensional vector of parameters. Score matching estimates the model by minimizing the expected squared distance between the model score function $\psi_v(\theta) = \nabla_\theta\log q_v(\theta)$ and the data score function $\psi_{D}(\theta) = \nabla_\theta\log p_{D}(\theta)$
\begin{equation}
J(v) = \frac12\int p_D(\theta)\|\psi_v(\theta)-\psi_D(\theta)\|^2\;d\theta
\end{equation}
A simple trick was suggested in \cite{hyvarinen05} to avoid the expensive estimation of the data score function $\psi_D(\theta)$ from $D$.

Similar ideas can be applied here to train our neural network surrogate. Notice that the posterior density is known up to a constant, the data score function then can be evaluated exactly $\psi_{D}(\theta) = -\nabla_\theta U(\theta)$. This allows us to estimate our density model $q_v(\theta)$ without requiring samples from the posterior distribution. Since sampling from the posterior distribution is computationally costly, we sample from the simpler and cheaper surrogate induced distribution (our variational model) instead. The corresponding expected squared distance is 
\begin{equation}\label{eq:scorevar}
\tilde{J}(v) = \frac12\int q_v(\theta)\|\psi_v(\theta)-\psi_D(\theta)\|^2\;d\theta
\end{equation}
Note that the model score function $\psi_v(\theta)=-\nabla_\theta z(\theta)$, $\tilde{J}(v)$ is exactly the expected squared distance $\tilde{D}_{SM}$ we introduced in section \ref{sec:ffvb}. In the case that our density model is not degenerate, we have a similar local consistency result to \cite{hyvarinen05} as shown by the following theorem.
\begin{theorem}\label{thm:consistency}
Assume that the data density $p_D(.)$ follows the model: $p_D(.) = q_v(.)$ for some $v^{\ast}$. Further, assume that no other parameter value gives a probability density that is equal to $q_{v^{\ast}}(.)$, and that $q_v(\theta)>0$ for all $v,\theta$. Then
\[
\tilde{J}(v) = 0 \Leftrightarrow v = v^{\ast} 
\]
\end{theorem}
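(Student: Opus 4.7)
The plan is to mirror the structure of Hyvärinen's original local consistency argument, but with the weighting measure $q_v$ in place of the data measure $p_D$. The forward ($\Leftarrow$) direction is immediate: by hypothesis $p_D(\cdot) = q_{v^\ast}(\cdot)$, so $\psi_D(\theta) = \nabla_\theta \log p_D(\theta) = \nabla_\theta \log q_{v^\ast}(\theta) = \psi_{v^\ast}(\theta)$ pointwise, and substituting $v = v^\ast$ into \eqref{eq:scorevar} makes the integrand identically zero.

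For the nontrivial ($\Rightarrow$) direction, suppose $\tilde J(v)=0$. The integrand $q_v(\theta)\|\psi_v(\theta)-\psi_D(\theta)\|^2$ is nonnegative, and the assumption $q_v(\theta) > 0$ everywhere means the weight never vanishes. Hence the squared norm must vanish almost everywhere with respect to Lebesgue measure on the parameter domain, i.e., $\nabla_\theta \log q_v(\theta) = \nabla_\theta \log p_D(\theta)$ a.e. Assuming the usual regularity (the domain is connected and the log-densities are differentiable, as is implicit in the score-matching framework), integrating this pointwise gradient equality along paths gives $\log q_v(\theta) - \log p_D(\theta) = c$ for some constant $c$, i.e., $q_v(\theta) = e^c p_D(\theta)$ almost everywhere.

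Since both $q_v$ and $p_D$ are proper probability densities that integrate to $1$, we must have $e^c = 1$, so $q_v \equiv p_D \equiv q_{v^\ast}$. The identifiability hypothesis — that no parameter other than $v^\ast$ yields a density equal to $q_{v^\ast}$ — then forces $v = v^\ast$, closing the equivalence.

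The main obstacle, and the step that most needs careful hand\-ling, is the passage from the weighted-$L^2$ equality $\tilde J(v) = 0$ to the pointwise gradient equality: in Hyvärinen's original result the weight is $p_D$, and here it is $q_v$, so the hypothesis $q_v(\theta) > 0$ is doing real work (otherwise one could only conclude equality on the support of $q_v$). The subsequent integration of $\nabla_\theta \log q_v = \nabla_\theta \log p_D$ also quietly uses connectedness of the parameter space $\Theta$ and enough smoothness of the activation $\sigma(\theta;\gamma)$ to justify recovering the log-densities up to an additive constant; the rest (normalization and identifiability) is essentially bookkeeping.
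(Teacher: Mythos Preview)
Your proposal is correct and follows essentially the same argument the paper intends: the proof is deferred to the Supplementary Materials, but the text explicitly frames the theorem as ``a similar local consistency result to \cite{hyvarinen05},'' and your proof is precisely Hyv\"arinen's argument with the weighting measure $p_D$ replaced by $q_v$. You have also correctly isolated the one substantive modification---that the hypothesis $q_v(\theta)>0$ for all $\theta$ is what lets you pass from vanishing of the $q_v$-weighted integral to the pointwise equality $\psi_v=\psi_D$ almost everywhere---which is exactly the adaptation needed here.
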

As such, minimizing the expected squared distance $\tilde{D}_{SM}$ would be sufficient to estimate the model. For a proof, see Supplementary Materials.

{\bf Remark.} Note that the data score function is 
\[
\psi_D(\theta) = \sum_{n=1}^N\nabla_\theta\log p(y_n|\theta) + \nabla_\theta\log p(\theta)
\]
we may choose our surrogate function as 
\begin{equation}\label{eq:exact}
z(\theta) = \sum_{n=1}^N\log p(y_n|\theta) + \log p(\theta)
\end{equation}
then the data density $p_D$ follows the model exactly. In order to reduce computational cost, our model is usually much simpler than \eqref{eq:exact} ($s\ll N$). This allows us to explore and exploit the redundancy in the data from a function approximation perspective.

In practice, samples from the surrogate induced distribution can be collected as observations and our surrogate can be trained by minimizing the empirical version of $\tilde{J}(v)$. A regularization term could be included to improve numerical stability.

Now suppose that we have collected training data of size $t$ from the Markov chain history
\begin{equation}\label{eq:scoredata}
\mathcal{T}_s^{(t)} := \{(\theta_n,\nabla_\theta U(\theta_n))\}_{n=1}^t \in \mathbb{R}^d\times \mathbb{R}^d
\end{equation}
where $\theta_n$ is the $n$-th sample. The estimator of the output weight vector (variational parameter) can be obtained by minimizing the empirical square distance between the gradient of the surrogate and the gradient of the potential (i.e., score function) plus an additional regularization term:
\begin{equation}\label{eq:sm}
\hat{v} = \arg\min_v \frac12\sum_{n=1}^t\|\nabla_\theta z(\theta_n) - \nabla_\theta U(\theta_n)\|^2 + \frac{\lambda}{2} \|v\|^2
\end{equation}
which has an online updating formula summarized in the following proposition \ref{prop:onlinescore} (see Supplementary Materials for a detailed proof). For ease of presentation, we use the additive nodes $\sigma(\theta;\gamma_i) = \sigma(w_i\cdot\theta + d_i)$\footnote{Note that our method works for a general class of nonlinear nodes, including the radial basis functions (RBF).}
\begin{proposition}\label{prop:onlinescore}
Suppose our current estimator of the output weight vector is $v^{(t)}$ based on the current training dataset $\mathcal{T}_s^{(t)} := \{(\theta_n,\nabla_\theta U(\theta_n))\}_{n=1}^t \in \mathbb{R}^d\times \mathbb{R}^d$ using $s$ hidden neurons. Given a new training data point $(\theta_{t+1},\nabla_\theta U(\theta_{t+1}))$, the updating formula for the estimator is given by
\begin{equation}\label{eq:update}
v^{(t+1)} = v^{(t)} + W^{(t+1)}(\nabla_\theta U(\theta_{t+1}) - A_{t+1}v^{(t)})
\end{equation}
where
\begin{align*}
W^{(t+1)} &= C^{(t)} A'_{t+1}\left[I_d + A_{t+1}C^{(t)}A_{t+1}'\right]^{-1} \\
A_{t+1} &= (A_1(\theta_{t+1}),\ldots,A_s(\theta_{t+1}))
\end{align*}
with $\quad A_i(\theta_{t+1}) := \sigma'(w_i\cdot\theta_{t+1}+d_i)w_i $, and $C^{(t)}$ can be updated by {\emph{Sherman-Morrison-Woodbury}}formula:
\begin{equation}\label{eq:smw}
C^{(t+1)} = C^{(t)} - W^{(t+1)}A_{t+1}C^{(t)}
\end{equation}
\end{proposition}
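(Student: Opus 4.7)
My plan is to recognize \eqref{eq:sm} as a regularized linear least-squares problem in $v$ and to derive the update as the standard recursive-least-squares / Kalman-type formula obtained from the Sherman-Morrison-Woodbury identity. First I would observe that, since $z(\theta)=\sum_{i=1}^s v_i\sigma(w_i\cdot\theta+d_i)$, the gradient $\nabla_\theta z(\theta_n)$ is linear in $v$: writing $A_n=(A_1(\theta_n),\ldots,A_s(\theta_n))\in\mathbb{R}^{d\times s}$ with the columns $A_i(\theta_n)=\sigma'(w_i\cdot\theta_n+d_i)w_i$ as in the statement, we have $\nabla_\theta z(\theta_n)=A_n v$. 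Setting $g_n=\nabla_\theta U(\theta_n)$, the objective in \eqref{eq:sm} becomes $\tfrac12\sum_{n=1}^t\|A_n v-g_n\|^2+\tfrac{\lambda}{2}\|v\|^2$, whose normal equations give the closed-form minimizer $v^{(t)}=C^{(t)} q_t$, with $C^{(t)}=P_t^{-1}$, $P_t=\lambda I_s+\sum_{n=1}^t A_n^\top A_n$, and $q_t=\sum_{n=1}^t A_n^\top g_n$. In particular the proposition is implicitly claiming $C^{(0)}=\lambda^{-1} I_s$ as initial condition.

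Next I would derive \eqref{eq:smw}. Since $P_{t+1}=P_t+A_{t+1}^\top A_{t+1}$ is a rank-$d$ update of $P_t$, Sherman-Morrison-Woodbury gives
\begin{equation*}
C^{(t+1)}=C^{(t)}-C^{(t)}A_{t+1}^\top\bigl(I_d+A_{t+1}C^{(t)}A_{t+1}^\top\bigr)^{-1}A_{t+1}C^{(t)},
\end{equation*}
and identifying the leading factor as $W^{(t+1)}=C^{(t)}A_{t+1}^\top(I_d+A_{t+1}C^{(t)}A_{t+1}^\top)^{-1}$ yields \eqref{eq:smw}. Then to obtain \eqref{eq:update} I would use $q_{t+1}=q_t+A_{t+1}^\top g_{t+1}$ and compute
\begin{equation*}
v^{(t+1)}=C^{(t+1)}q_{t+1}=(C^{(t)}-W^{(t+1)}A_{t+1}C^{(t)})q_t+C^{(t+1)}A_{t+1}^\top g_{t+1}=v^{(t)}-W^{(t+1)}A_{t+1}v^{(t)}+C^{(t+1)}A_{t+1}^\top g_{t+1}.
\end{equation*}
The final simplification hinges on the identity $C^{(t+1)}A_{t+1}^\top=W^{(t+1)}$, which I would verify by expanding $C^{(t+1)}A_{t+1}^\top=C^{(t)}A_{t+1}^\top-W^{(t+1)}A_{t+1}C^{(t)}A_{t+1}^\top$ and factoring $C^{(t)}A_{t+1}^\top(I_d+B)^{-1}[(I_d+B)-B]=C^{(t)}A_{t+1}^\top(I_d+B)^{-1}$ with $B=A_{t+1}C^{(t)}A_{t+1}^\top$. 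Substituting back gives exactly \eqref{eq:update}.

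The main obstacle is not conceptual — this is essentially the RLS derivation — but purely bookkeeping: tracking the shapes ($A_n$ is $d\times s$, so $W^{(t+1)}$ is $s\times d$ and $I_d+A_{t+1}C^{(t)}A_{t+1}^\top$ is the $d\times d$ matrix being inverted) and being careful with the direction of the Woodbury identity. I would also briefly note that $P_t\succeq\lambda I_s\succ 0$, so $C^{(t)}$ is well defined throughout, and that $I_d+A_{t+1}C^{(t)}A_{t+1}^\top\succ 0$ for the same reason, which guarantees that $W^{(t+1)}$ exists at every step.
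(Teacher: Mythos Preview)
Your proposal is correct and follows exactly the approach the paper takes: recognize that $\nabla_\theta z(\theta_n)=A_n v$ makes \eqref{eq:sm} a ridge-regression problem with normal-equation solution $v^{(t)}=(\lambda I_s+\sum_n A_n^\top A_n)^{-1}\sum_n A_n^\top g_n$, then apply the Sherman--Morrison--Woodbury identity to the rank-$d$ update $P_{t+1}=P_t+A_{t+1}^\top A_{t+1}$ to obtain \eqref{eq:smw}, and finally verify the gain identity $C^{(t+1)}A_{t+1}^\top=W^{(t+1)}$ to collapse the estimator recursion into \eqref{eq:update}. This is the standard recursive least-squares derivation the paper invokes (and its stated initialization $v^{(0)}=0$, $C^{(0)}=\lambda^{-1}I_s$ is precisely the one you identify), so there is nothing to add.
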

The estimator and inverse matrix can be initialized as $v^{(0)} = 0,\; C^{(0)} = \frac1\lambda I_s$. The online learning can be achieved by storing the inverse matrix $C$ and performing the above updating formulas, which cost $\mathcal{O}(d^3 + ds^2)$ computation and $\mathcal{O}(s^2)$ storage independent of $t$.

\subsection{Variational HMC in Practice}
\begin{algorithm*}[!ht]
\begin{algorithmic}
\STATE{\bfseries Input:} Regularization coefficient $\lambda$, transition function $\mu_t$, number of hidden neurons $s$, starting position $\theta^{(1)}$ and HMC parameters
\STATE {\it Find the Maximum A Posterior $\theta^L$ and compute the Hessian matrix $\nabla^2_\theta U(\theta^L)$}
\STATE {\it Randomly assign the node parameters: $\{\gamma_i\}_{i=1}^s$}
\FOR{$t =1,2,\ldots,T$}
  \STATE {\it Propose $(\theta^{\ast},r^{\ast})$ with regularized surrogate induced Hamiltonian flow, using $\nabla_\theta V_t(\theta)$}
  \STATE {\it Perform Metropolis-Hasting step according to the underlying distribution $\pi_t\sim \exp(-V_t(\theta)-K(r))$} 
  \IF{New state is accepted \& $t<t_0$} 
  \STATE {\it Acquire new training data point $(\theta_{t+1},\nabla_\theta U(\theta_{t+1}))$} 
  \STATE {\it Update the output weight estimate $v^{(t+1)} \leftarrow \eqref{eq:update}$ and the inverse matrix $C^{(t+1)}\leftarrow \eqref{eq:smw}$ }
  \ELSE
  \STATE $v^{(t+1)} = v^{(t)},\; C^{(t+1)} = C^{(t)}$
  \ENDIF
  \ENDFOR
 \end{algorithmic}
\caption{Variational Hamiltonian Monte Carlo}
\label{alg:vhmc}
\end{algorithm*}

The neural network based surrogate is capable of approximating the potential energy function well when there is enough training data. However, the approximation could be poor when only few training data are available which is true in the early stage of the Markov chain simulations. To alleviate this issue, we propose to add an auxiliary regularizer which provides enough information for the sampler at the beginning and gradually diminishes as the surrogate becomes increasingly accurate. Here, we use the Laplace's approximation to the potential energy function but any other fast approximation could be used. The regularized surrogate approximation then takes the form
\[
V_t(\theta) = \mu_tz_t(\theta) + \frac12(1-\mu_t) (\theta-\theta^L)'\nabla^2_\theta U(\theta^L)(\theta-\theta^L)
\]
where $\theta^L$ is the maximum \emph{a posteriori} (MAP) estimate and $\mu_t\in[0,1]$ is a smooth monotone function monitoring the transition from the Laplace's approximation to the surrogate approximation. Refining the surrogate approximation by acquiring training data from simulating the regularized surrogate induced Hamiltonian flow, we arrive at an efficient approximate inference method: {\it Variational Hamiltonian Monte Carlo (VHMC)} (Algorithm \ref{alg:vhmc}). 

In practice, the surrogate approximation may achieve sufficient quality and an early stopping could save us from inefficient updating of the output weight vector. In fact, the stopping time $t_0$ serves as a knob to control the desired approximation quality. Before stopping, VHMC acts as a {\it free-form} variational Bayes method that keeps improving itself by collecting training data from the history of the Markov chain. After stopping, VHMC performs as a standard HMC algorithm which samples from the surrogate induced distribution. VHMC successfully combines the advantages of both variational Bayes and Hamiltonian Monte Carlo, resulting in higher computational efficiency compared to HMC and better approximation compared to VB.

\section{Experiments}\label{sec:examples}
In this section, four experiments are conducted to verify the effectiveness and efficiency of the proposed Variational HMC framework. In the first two examples, we demonstrate the performance of VHMC from a pure approximation perspective and compare it to state-of-the-art \emph{fix-form} variational Bayes methods. We then test the efficiency of VHMC on two machine learning problems with large datasets and compare our methods to SGLD \citep{Welling11}, which is one of the state-of-the-art stochastic gradient MCMC methods. Compared to standard HMC, Variational HMC introduces some additional hyper-parameters, such as the number of random bases $s$ and the transition monitor $\mu_t$, that might be hard to tune in practice. Generally speaking, we want our method to be as efficient as possible while maintaining good approximation. Therefore, we choose $s$ to be small as long as a reasonable level of accuracy can be achieved and $\mu_t$ to stay small until enough data have been collected to stabilize the training procedure. These can all be done by monitoring the acceptance rate using an initial run as in \cite{NNSHMC}. In our experiments, we found that $\mu_t = 1 - \exp(-t/n_s)$ is a useful schedule where $n_s$ can be adjusted to accommodate different transition speed according to the complexity of the model.

\subsection{A Beta-binomial Model for Overdispersion}
We first demonstrate the performance of our variational Hamiltonian Monte Carlo method on a toy example from \cite{albert09}, which considers the problem of estimating the rates of death from stomach cancer for the largest cities in Missouri. The data is available from the R package LearnBayes which consists of $20$ pairs $(n_j,y_j)$ where $n_j$ records the number of individuals that were at risk for cancer in city $j$, and $y_j$ is the number of cancer deaths that occurred in that city. The counts $y_j$ are overdispersed compared to what would be expected under a binomial model with a constant probability; therefore, \cite{albert09} assumes a beta-binomial model with mean $m$ and precision $K$:
\[
p(y_j|m,K) = {n_j \choose y_j} \frac{B(Km+y_j,K(1-m)+n_j-y_j)}{B(Km,K(1-m))}
\]
and assigns the parameters the following improper prior:
\[
p(m,K) \propto \frac1{m(1-m)}\frac{1}{(1+K)^2}
\]
The resulting posterior is extremely skewed (as shown in the bottom right corner in Figure \ref{fig:bbocontour}) and a reparameterization
\[
x_1=\mathrm{logit}(m),\;x_2=\mathrm{logit}(K)
\]
is proposed to ameliorate this issue.

\begin{figure}[!t]
\centering
\includegraphics[width=.6\textwidth]{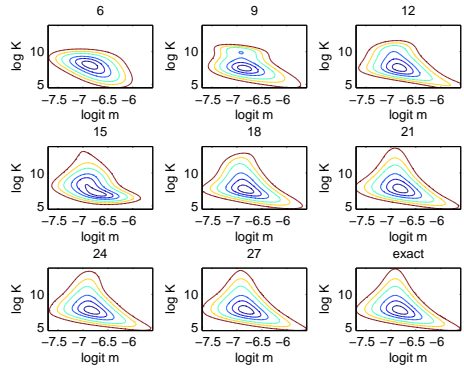}
\caption{Approximate posteriors for a varying number of hidden neurons. Exact posterior at bottom right.}
\label{fig:bbocontour}
\end{figure} 

\begin{figure}[!t]
\centering
\includegraphics[width=.6\textwidth]{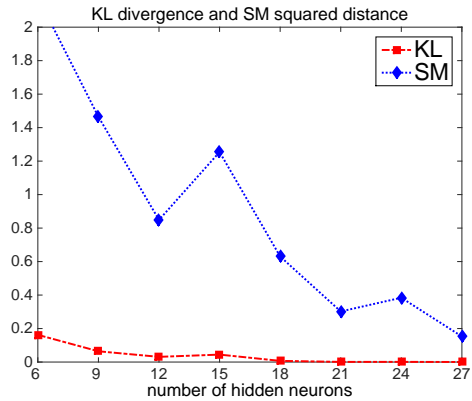}
\caption{KL-divergence and score matching squared distance between the surrogate approximation and the exact posterior density using an increasing number of hidden neurons.}
\label{fig:bboklsm}
\end{figure} 

We choose $\mu_t = 1 - \exp(-t/200)$ as our transition schedule and set up the HMC parameter to achieve around $85\%$ acceptance. We run the variational Hamiltonian Monte Carlo long enough so that we can estimate the full approximation qualify of our surrogate. To demonstrate the approximation performance in terms of the number of hidden neurons $s$, we train the neural network based surrogate using different numbers of hidden neurons and examine the resulting KL-divergence and score matching squared distance to the true posterior density. As we can see from Figures \ref{fig:bbocontour} and \ref{fig:bboklsm}, the neural network based surrogate indeed offers a high quality approximation and becomes more accurate as $s$ increases. The surrogate induced Hamiltonian flow effectively explores the parameter space and transfers information from the posterior to the surrogate.

\subsection{Bayesian Probit Regression}
Next, we demonstrate the approximation performance of our Variational HMC algorithm relative to existing variational approaches on a simple Bayesian classification problem, binary probit regression. Given $N$ observed data pairs $\{(y_n,x_n)|y_n\in\{0,1\},\;x_n\in\mathbb{R}^d\}_{n=1}^N$, the model comprised a probit likelihood function $P(y_n=1|\theta) = \Phi(\theta^Tx_n)$ and a Gaussian prior over the parameter $p(\theta) = \mathcal{N}(0,100)$, where $\Phi$ is the standard Gaussian cdf. A full covariance multivariate normal approximation is used for all variational approaches. The synthetic data we use are simulated from the model, with $N=10000$ and $d=5$. We show the performance averaged over 50 runs for all methods. We compare our algorithm to Variational Bayesian Expectation Maximization (VBEM) \citep{beal02,ormerod10}, and the best fixed-form variational approximation of \cite{salimans13} that uses Hessian and subsampling (FF-Minibatch). VBEM and FF-Minibatch are both implemented using the code provided in \cite{salimans13} with the same learning rate and minibatch size. For all variational approaches, we initialize the posterior approximation to the prior. For our Variational HMC algorithm, we choose $s=100$ random hidden units for the surrogate and set the starting point to be the origin. The number of hidden units is chosen in such a way that the surrogate is flexible enough to fit the target well and remain fast in computation. The HMC parameters are set to make the acceptance probability around $70\%$.  The target density is almost Gaussian, and a fast transition $\mu_t = 1-\exp(-t/5)$ is enough to stabilize our algorithm. Since this experiment is on synthetic data, we follow \cite{salimans13} to assess the approximation performance in terms of the root mean squared error (RMSE) between the estimate (variational mean for VB and sample mean for VHMC) and the true parameter that is used to generate the dataset. 

\begin{figure}[!t]
\centering
\centerline{\includegraphics[width=.6\textwidth]{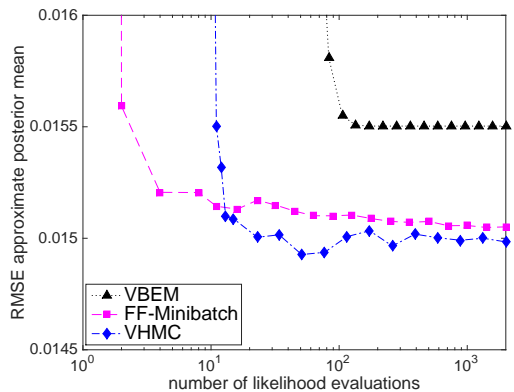}}
\caption{RMSE approximate posterior mean as a function of the number of likelihood evaluations for difference variational Bayesian approaches and our Variational HMC algorithm.}
\label{fig:rmseBPR}
\end{figure} 

Figure \ref{fig:rmseBPR} shows the performance of our Variational HMC algorithm, as well as the performance of the other two variational Bayes methods. As we can see from the graph, VHMC and the subsampling based fixed-form variational approach (FF-minibatch) achieve lower RMSE than the VBEM algorithm. That is because of the extra factorization assumptions made by VBEM when introducing the auxiliary variables \citep{ormerod10}. Even though Gaussian approximation is already sufficiently accurate  on this simple example, VHMC can still arrive at a lower RMSE due to the extra flexibility provided by the {\it free-form} neural network surrogate function.

\subsection{Bayesian Logistic Regression}
Next, we test the efficiency of our Variational HMC method as a scalable sampling method. We first apply Variational HMC to a Bayesian logistic regression model. Given the $i$-th input vector $x_i$, the corresponding output (label) $y_i=\{0,1\}$ is assumed to follow the probability $p(y_i=1|x_i,\beta) = 1/(1+\exp(-x_i^T\beta))$ and a Gaussian prior $p(\beta) = \mathcal{N}(0,100)$ is used for the model parameter $\beta$. We test our proposed algorithm on the $\mathtt{a9a}$ dataset \citep{lin08}. The original dataset, which is compiled from the UCI $\mathtt{adult}$ dataset, has 32561 observations and 123 features. We use a $50$ dimension random projection of the original features. We choose $s=2000$ hidden units for the surrogate and set a transition schedule $\mu_t = 1-\exp(-t/500)$ for our VHMC algorithm. We then compare the algorithm to HMC \citep{duane87,neal11} and to stochastic gradient Langevin dynamics (SGLD) \citep{Welling11}. For HMC and VHMC, we set the {\it leap-frog} stepsize such that the acceptance rate is around $70\%$. For SGLD we choose batch size of 500 and use a range of fixed stepsizes.

To illustrate the sampling efficiency of all methods, we follow \cite{Ahn12} to investigate the time normalized effective sample size (ESS)\footnote{Given $B$ samples, ESS = $B[1+2\sum_{k=1}^K\gamma(k)]^{-1}$, where $\gamma(k)$ is the sample autocorrelation at lag $k$ \citep{geyer92}} averaged over the $51$ parameters and compare this with the relative error after a fixed amount of computation time. The relative error of mean (REM) and relative error of covariance (REC) is defined as 
\begin{equation}\label{eq:REMC}
\mathrm{REM}_t = \frac{\sum_{i}|\overline{\beta_i^t} - \beta_i^o|}{\sum_i|\beta_i^o|},\quad \mathrm{REC}_t = \frac{\sum_{ij}|C_{ij}^t - C_{ij}^o|}{\sum_{ij}|C_{ij}^o|}
\end{equation}
where $\overline{\beta^t} = \frac1t\sum_{t'=1}^t\beta_{t'}, \; C^t = \frac1t\sum_{t'=1}^t(\beta_{t'}-\overline{\beta_t})(\beta_{t'}-\overline{\beta_t})^T$ are the sample mean and sample covariance up to time $t$ and the ground truth $\beta^o,\; C^o$ are obtained using a long run (T = 500K samples) of HMC algorithm. 

\begin{figure}[!t]
\centering
\centerline{\includegraphics[width=0.45\textwidth]{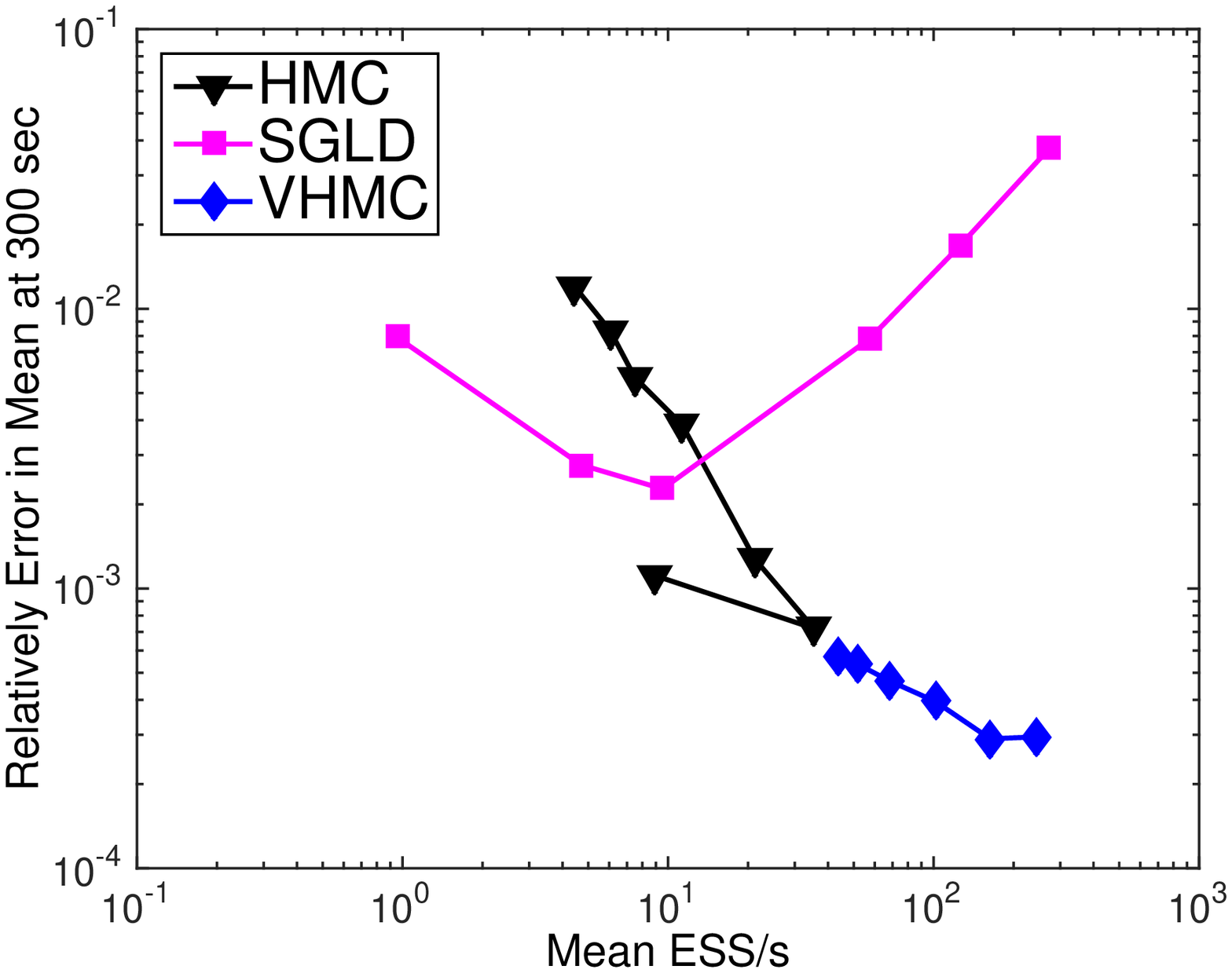}\hspace{-0pt}\includegraphics[width=0.45\textwidth]{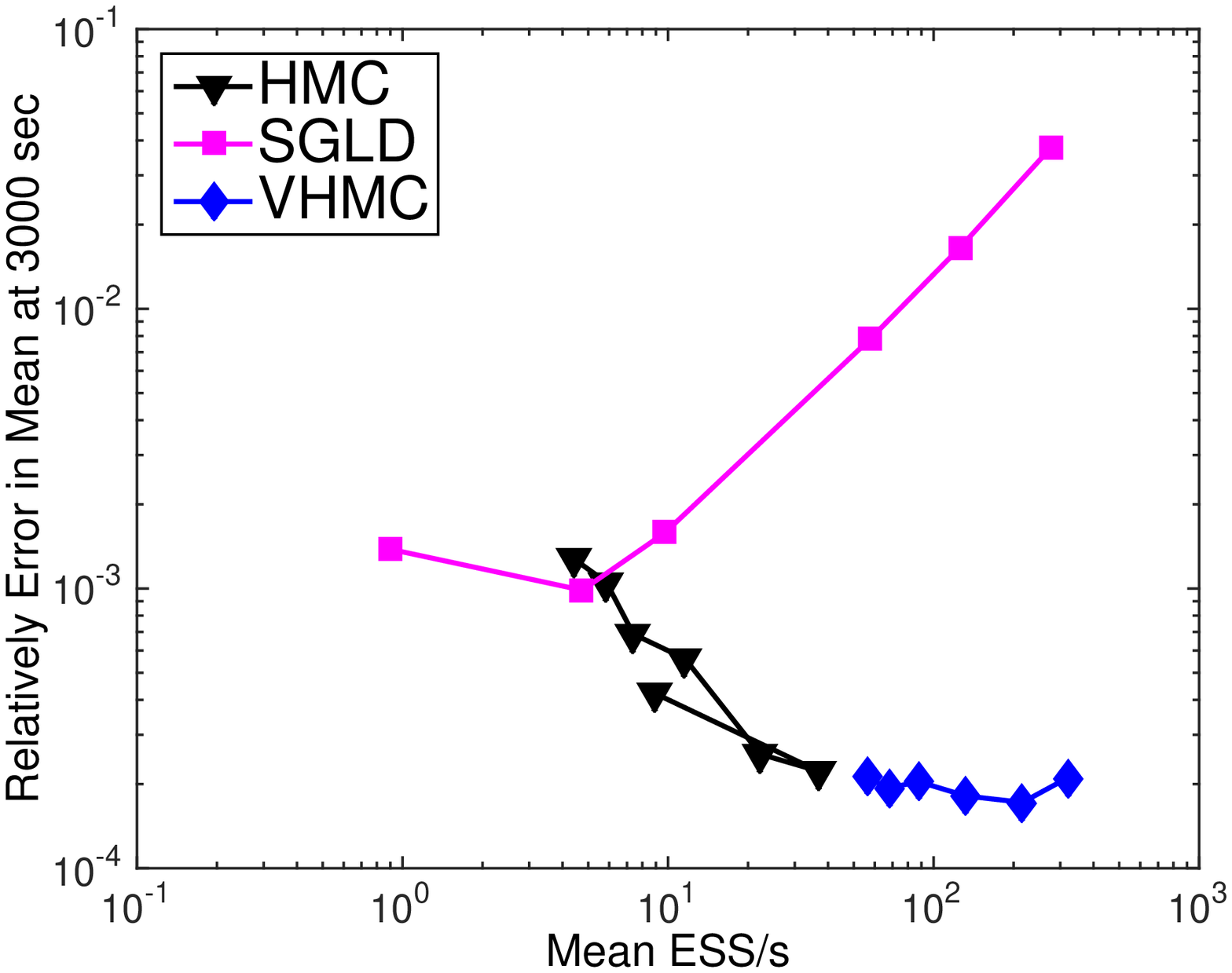}}
\centerline{\includegraphics[width=0.45\textwidth]{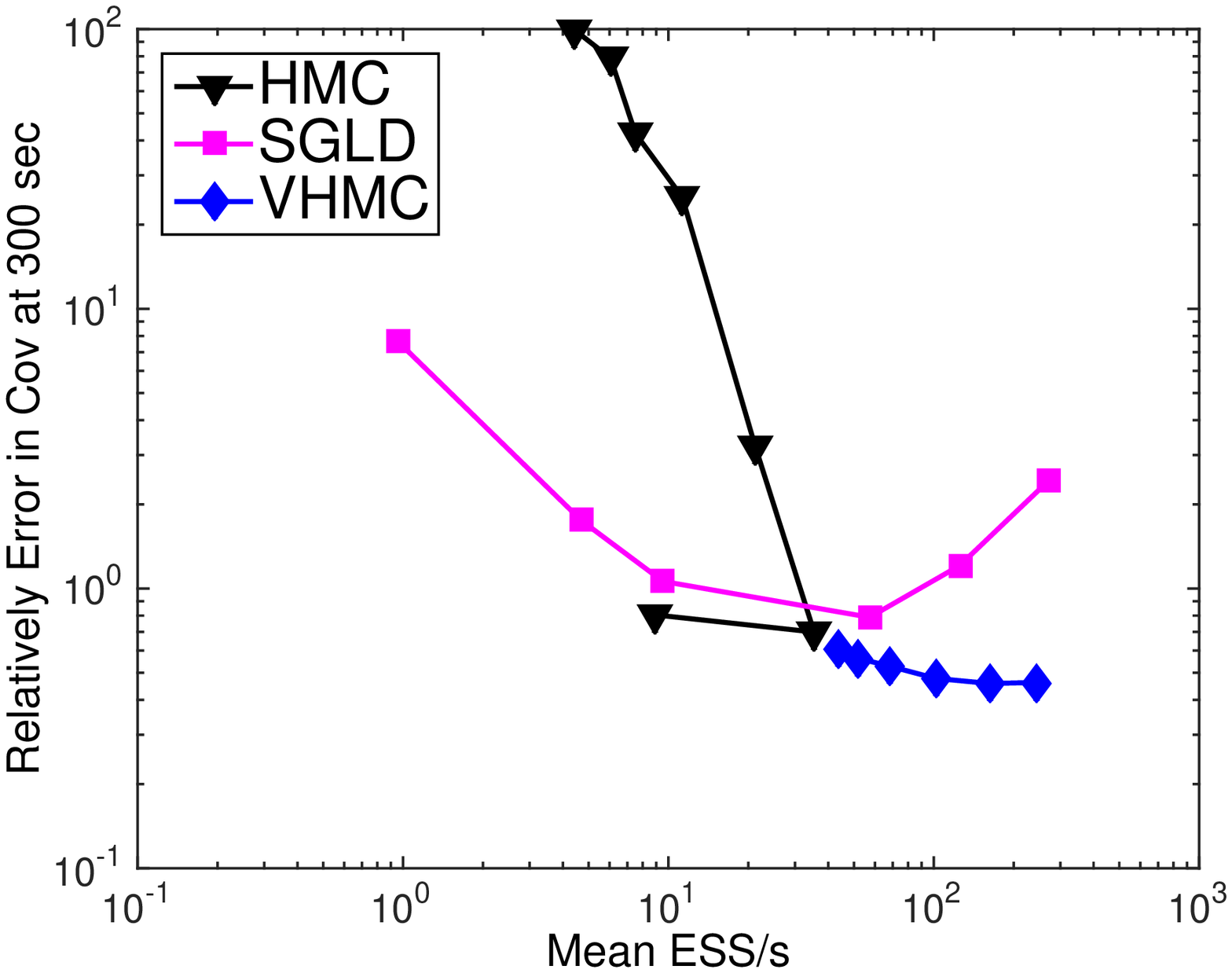}\hspace{-0pt}\includegraphics[width=0.45\textwidth]{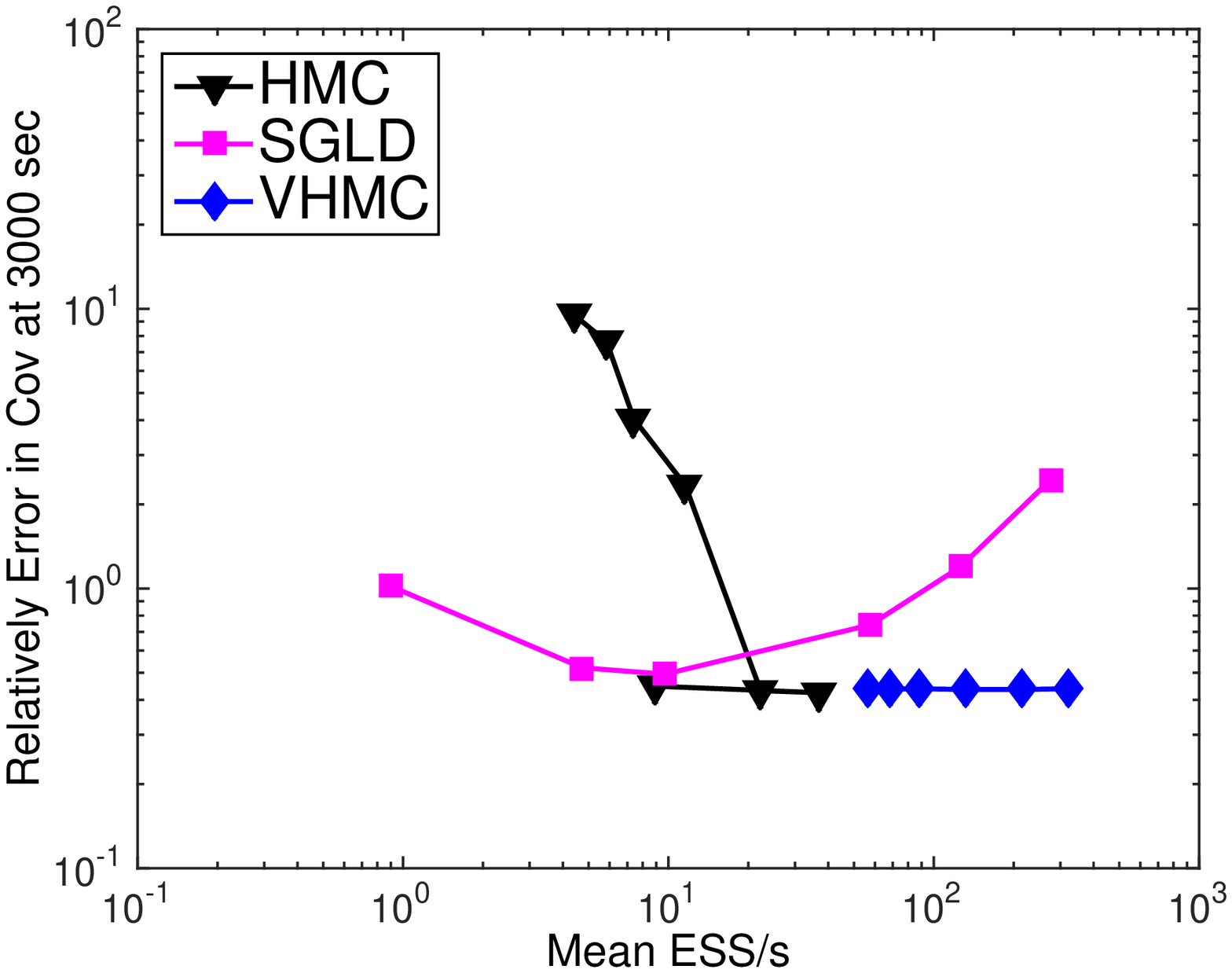}}
\caption{Final error of logistic regression at time T versus mixing rate for the mean (top) and covariance (bottom) estimates after 300 (left) and 3000 (right) seconds of computation. Each algorithm is run using different setting of parameters.}
\label{fig:remc}
\end{figure}

Figure \ref{fig:remc} shows the relative error at time T = 300, T = 3000 as a function of the time normalized mean ESS, which is a measure of the mixing rate. The results for the mean are shown on the top, and those for the covariance are on the bottom. We run each algorithm with a different setting of parameters that control the mixing rate: number of {\it leap-frog} steps $L=[50,40,30,20,10,5,1]$ for HMC and $L=[50,40,30,20,10,5]$ for VHMC, and stepsizes $\epsilon = [2e\mhyphen3,1e\mhyphen3, 5e\mhyphen4, 1e\mhyphen4,5e\mhyphen5, 1e\mhyphen5]$ for SGLD.

As we decrease the stepsize, SGLD becomes less biased in the gradient approximation, resulting in smaller relative error. However, the exploration efficiency drops at the same time and sampling variance gradually dominates the relative error. In contrast, HMC uses a fixed {\it leap-frog} stepsize and therefore maintains high exploration efficiency in parameter space. The down side is the expensive computation of the full gradient and the possible turning back of the trajectories when the number of {\it leap-frog} steps is unnecessarily large. Adopting a flexible neural network surrogate, VHMC balances the computation cost and approximation quality much better than subsampling and achieves lower relative error with high mixing rates. 

\subsection{Independent Component Analysis}
Finally, we apply our method to sample from the posterior distribution of the unmixing matrix in Independent Component Analysis (ICA) \citep{hyvarinen00}. Given $N$ $d$-dimensional observations $X=\{x_n\in\mathbb{R}^d\}_{n=1}^N$, we model the data as 
\begin{equation}\label{eq:ica}
p(x|W) = |\det(W)|\prod_{i=1}^dp_i(w_i^Tx)
\end{equation}
where $w_i$ is the $i$-th row of $W$ and $p_i$ is supposed to capture the true density of the $i$-th independent component. Following \cite{Welling11}, we use a Gaussian prior over the unmixing matrix $p(w_{ij}) = \mathcal{N}(0,\sigma)$ and choose $p_i(y_i) = [4\cosh(\frac12y_i)]^{-1}$ with $y_i = w_j^Tx$. We evaluate our method using the MEG dataset \citep{vigario97}, which has 122 channels and 17730 observations. We extract the first 5 channels for our experiment which leads to samples with $25$ dimensions. We then compare our algorithm to standard HMC and SGLD \citep{Welling11}. For SGLD, we use a natural gradient \citep{amari96} which has been found to improve the efficiency of gradient descent significantly. We set $\sigma = 100$ for the Gaussian prior. For HMC and Variational HMC, we set the {\it leap-frog} stepsize to keep the acceptance ratio around $70\%$ and set $L=40$ to allow an efficient exploration in parameter space. For SGLD, we choose batch size of $500$ and use stepsizes from a polynomial annealing schedule $a(b+t)^{-\delta}$, with $a = 5\times 10^{-3}, b = 10^{-4}$ and $\delta = 0.5$. (This setting reduces the stepsize from $5\times 10^{-5}$ to $1\times 10^{-6}$ during 1e+7 iterations). We choose $s=1000$ hidden units and set the transition schedule $\mu_t = 1-\exp(-t/2000)$ for our Variational HMC algorithm. Note that the likelihood \eqref{eq:ica} is row-permutation invariant. To measure the convergence of all samplers on this ICA problem, we therefore use the Amari distance \citep{amari96} $d_A(\overline{W},W_0)$, where $\overline{W}$ is the sample average and $W_0$ is the true unmixing matrix estimated using a long run (T = 100K samples) of standard HMC algorithm.

\begin{figure}[!t]
\centering
\centerline{\includegraphics[width=.6\textwidth]{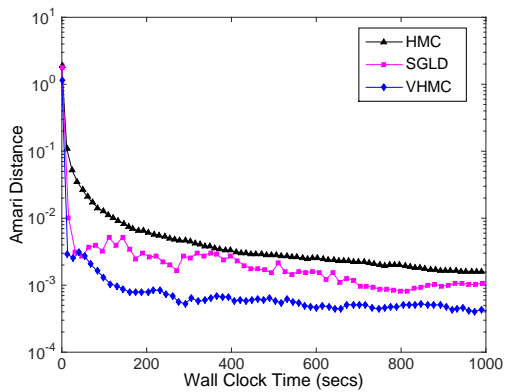}}
\caption{Convergence of Amari distance on the MEG data for HMC, SGLD and our Variational HMC algorithm.}
\label{fig:daica}
\end{figure} 

The Amari distance as a function of runtime is reported for each of these methods in Figure \ref{fig:daica}. From the graph, we can see that SGLD converges faster than standard HMC. The noise introduced by subsampling is compensated by the fast exploration in parameter space which allows for an early arrival at the high probability domain for SGLD. However, when the variance dominates the bias, the convergence speed slows down since SGLD requires annealing (or small) stepsize that inevitably leads to low exploration efficiency. By maintaining efficient exploration in parameter space (same stepsize as HMC) while reducing the computation in simulating the Hamiltonian flow, VHMC outperforms SGLD, arriving at a lower Amari distance much more rapidly.

\section{Conclusion}\label{sec:conc}
We have presented a novel approach, Variational Hamiltonian Monte Carlo, for approximate Bayesian inference. Our approach builds on the framework of HMC, but uses a flexible and efficient neural network surrogate function to approximate the expensive full gradient. The surrogate keeps refining its approximation by collecting training data while the sampler is exploring the parameter space. This way, our algorithm can be viewed as a {\it free-form} variational approach. Unlike subsampling-based MCMC methods, VHMC maintains the relatively high exploration efficiency of its MCMC counterparts while reducing the computation cost. Compared to {\it fixed-form} variational approximation, VHMC is more flexible and thus can approximate more general target distribution better.

As the complexity of the model increases (e.g., high dimensional covariates), the computational cost of random network surrogates may increase since it usually requires more random bases to provide adequately accurate approximation in high dimensions. However, the free-style surrogate construction in VHMC allows us to reduce the computation cost by adapting to the model if some special structures exist. For example, one can build surrogates based on graphical models to exploit the conditional independence structures in the model. Alternatively, one can also train the full neural network with sparsity constraints imposed to the input weights. Both approaches could be interesting to explore in future work.

\subsection*{Acknowledgement}
This work is supported by NIH grant R01AI107034 and NSF grants DMS-1418422 and DMS-1622490.

\bibliographystyle{ba}
\bibliography{example_paper}

\end{document}